\newtheorem{theorem}{Theorem}
\newtheorem{lemma}{Lemma}
\newcommand{\rF}{\rm F}
\newcommand{\argmin}{\mathop{\mathrm{argmin}}}
\newcommand{\argmax}{\mathop{\mathrm{argmax}}}
\begin{document}

\title{Bounded Projection Matrix Approximation with Applications to Community Detection}

\author{Zheng Zhai, Hengchao Chen, and Qiang Sun

\thanks{Manuscipt received \today.}
\thanks{Zheng Zhai is with the College of Artificial Intelligence, Dalian Maritime University, and the Department of
Statistical Sciences, University of Toronto. Hengchao Chen and Qiang Sun are with the Department of
Statistical Sciences, University of Toronto.}
\thanks{Corresponding author: Qiang Sun. E-mail: \texttt{qiang.sun@utoronto.ca}.}
%\thanks{This paragraph of the first footnote will contain the date on which you submitted your paper for review. It will also contain support information, including sponsor and financial support acknowledgment. For example, ``This work was supported in part by the U.S. Department of Commerce under Grant BS123456.'' }
%\thanks{The next few paragraphs should contain the authors' current affiliations, including current address and e-mail. For example, F. A. Author is with the National Institute of Standards and Technology, Boulder, CO 80305 USA (e-mail: author@boulder.nist.gov).}
%\thanks{S. B. Author, Jr., was with Rice University, Houston, TX 77005 USA. He is now with the Department of Physics, Colorado State University, Fort Collins, CO 80523 USA (e-mail: author@lamar.colostate.edu).}
}

\markboth{Journal of \LaTeX\ Class Files, Vol. 14, No. 8, August 2015}
{Shell \MakeLowercase{\textit{et al.}}: Bare Demo of IEEEtran.cls for IEEE Journals}
\maketitle

\begin{abstract}

%Learning the clustering information from a similarity matrix is an important problem in unsupervised learning.
Community detection is an important problem in unsupervised learning. This paper proposes to solve a projection matrix approximation problem with an additional entrywise bounded constraint. Algorithmically, we introduce a new differentiable convex penalty and derive an alternating direction method of multipliers (ADMM) algorithm. Theoretically, we establish the convergence properties of the proposed algorithm. Numerical experiments demonstrate the superiority of our algorithm over its competitors, such as the semi-definite relaxation method and spectral clustering. 

%This article addresses an important problem of learning clustering information from similarity matrices in unsupervised learning. To tackle this problem, we propose a novel framework that involves learning a low-rank projection matrix with an additional entry-wise bounded constraint.  Algorithmically, we then develop an alternating direction method of multipliers (ADMM) and a soft-relaxed penalty strategy. The convexity of the soft-relaxed penalty strategy  
%Theoretically, we establish the convergence properties of the proposed  ADMM algorithm thanks to the convexity of the soft-relaxed penalty. %Numerical experiments demonstrate the superiority of our algorithm over other state-of-the-art methods such as the convex relaxation and spectral clustering approaches.
\end{abstract}

\begin{IEEEkeywords}
 Projection matrix approximation, boundedness, convex relaxation, ADMM, community detection.
\end{IEEEkeywords}

\IEEEpeerreviewmaketitle

\section{Introduction}

\IEEEPARstart{C}{ommunity} detection is an important problem in unsupervised learning that has attracted the attention of researchers from various fields, such as mathematics, statistics, applied mathematics, physics, and social sciences. The goal of this problem is to partition $n$ data points into $K$ groups based on their pairwise similarities, which can be represented as a similarity matrix $A\in\mathbb{R}^{n\times n}$. A common approach to solve this problem is to first derive a lower-dimensional representation \cite{belkin2003laplacian,cai2018comprehensive,goyal2018graph,xu2021understanding} of the data from $A$ and then apply a clustering algorithm such as  $k$-means \cite{hartigan1979k} or EM algorithm \cite{jung2014clustering} to identify the clusters. The efficacy of this method is contingent on the quality of the data representation. 

A popular choice for the data representation is to use the top $K$ eigenvectors of $A$ as in spectral clustering~\cite{von2007tutorial}. Finding these eigenvectors is equivalent,  up to rotations, to determining the subspace spanned by these vectors. The latter is also equivalent to the following projection matrix approximation problem:
\begin{align}
X=\argmin_{X\in\mathcal{P}_{K}}\|A-X\|^2_{\rF},\label{equ:1}
\end{align}
where $\mathcal{P}_{K}\subseteq\mathbb{R}^{n\times n}$ is the set of rank-$K$ projection matrices. Thus, the effectiveness of spectral clustering is highly dependent on the quality of the projection matrix approximation. 

The unconstrained projection matrix approximation~\eqref{equ:1} may be less effective when extra information is available. % \scolor{in certain cases}. 
In community detection, for instance, an intermediate step is to estimate the projection matrix $X$ associated with the assignment matrix $\Theta\in\{0,1\}^{n\times K}$, where $\Theta_{ik}=1$ if and only if node $i$ belongs to group $k$; see Section \ref{sec:cd} for details. Such a  projection matrix has certain structures:
\begin{itemize}
    \item[1)] $X$ has non-negative elements.
    \item[2)] $X$ has elements upper bounded by $\max_k\frac{1}{n_k}$, where $n_k$ is the size of the $k$-th group.
\end{itemize}
Hence, it may be beneficial to seek projection matrices with these desired structures enforced. Inspired by this example, we propose to study the following bounded projection matrix approximation (BPMA) problem:
\begin{align}
X=\argmin_{X\in {\cal P}_K \atop X_{ij}\in [\alpha,\beta] } \|A- X\|_{\rm F}^2,\label{BPA}
\end{align}
where $\alpha,\beta\in\mathbb{R}$ are lower and upper bounds set {\it a priori}. In the above example, we simply set $\alpha=0$ and $\beta=\max_k\frac{1}{n_k}$. 

Due to the projection matrix and boundedness constraints, it is challenging to solve \eqref{BPA} directly. To address this difficulty, this paper proposes a new differentiable convex penalty to relax the boundedness constraint. We employ the alternating direction method of multipliers (ADMM) \cite{boyd2011distributed,sun2022efficient} to solve this relaxed problem. Moreover, we show that any limiting point of the solution sequence is a stationary point of the relaxed problem. Finally, we apply the proposed method to community detection and demonstrate its superiority over its competitors in both synthetic and real world datasets.

\subsection{Related Work}

Low-rank matrix optimization with additional structural constraints is a common problem in machine learning and signal processing \cite{zhang2012low,zhang2012inducible}. The problem aims to find the best low-rank matrix approximation that also satisfies certain structural constraints, such as non-negativity, symmetry, boundedness, and sparsity. One line of research studies the matrix factorization approach, such as non-negative matrix factorization~\cite{lee1999learning}, semi-nonnegative matrix factorization~\cite{ding2008convex}, bounded low-rank matrix approximation~\cite{kannan2012bounded}. Another line of research studies simultaneously low-rank and sparse matrix approximation~\cite{zhang2022graph,ji2011robust,richard2012estimation}. These works, however, only seek a low-rank matrix, which is not necessarily a projection matrix. In contrast, motivated by the problem of community detection, our paper studies the projection matrix approximation problem with additional boundedness constraints. We then propose an ADMM algorithm, and prove the convergence properties.

\section{Bounded Projection Matrix Approximation}

In this section, we study how to solve \eqref{BPA}. First, we relax the BPMA problem \eqref{BPA} using a differentiable convex penalty. Then we derive the ADMM algorithm that can efficiently solve the relaxed BPMA problem. 

\subsection{Differentiable Convex Penalty}

To start with, we define the following indicator function:
\[
I_{\alpha,\beta}(x) = \left\{ 
\begin{array}{ll}
+\infty, & x < \alpha\\
0, &\alpha \leq x \leq \beta \\
+\infty, & x > \beta.
\end{array} \right.
\] 
Then we can rewrite the BPMA problem \eqref{BPA} as the following optimization problem:
\begin{equation}\label{indictor_opt}
X=\argmin_{X\in {\cal P}_K} \|A-X\|_{\rm F}^2+ \sum_{ij} I_{\alpha,\beta}(X_{ij}).
\end{equation}
It is challenging to solve problem \eqref{indictor_opt} due to the discontinuity of the penalty function $I_{\alpha,\beta}(\cdot)$. To alleviate this issue, we propose  to replace $I_{\alpha,\beta}(\cdot)$ by $\lambda g_{\alpha,\beta}(\cdot)$ and solve
\begin{align}
    \label{Convex} 
    X=\argmin_{X\in\mathcal{P}_K}\|A-X\|_{\rF}^2+\lambda\sum_{ij}g_{\alpha,\beta}(X_{ij}),
\end{align}
where $\lambda>0$ is a tuning parameter and $g_{\alpha,\beta}(\cdot)$ is a differentiable convex penalty function given by
\begin{equation}\label{gab}
g_{\alpha,\beta}(x) =  (\min\{x-\alpha, 0\})^2 + (\min\{\beta-x,0\})^2.
\end{equation}
Since $g_{\alpha,\beta}(\cdot)$ is non-negative and $g_{\alpha,\beta}(x)=0$ if and only if $x\in[\alpha,\beta]$, problem \eqref{Convex} reduces to problem \eqref{indictor_opt} when $\lambda\to\infty$. We shall refer to problem \eqref{Convex} as the relaxed bounded projection matrix approximation (RBPMA) problem.

\subsection{Algorithm}

In this subsection, we develop an ADMM algorithm to  solve the RBPMA problem. First, we define the augmented Lagrangian ${\cal L}_\rho (X, Y, \Lambda)$ as
\[ 
\begin{aligned}
{\cal L}_\rho (X, Y, \Lambda) =&  \|A-X\|_{\rm F}^2 + \lambda\sum_{ij} g_{\alpha,\beta}(Y_{ij}) + \frac{\rho}{2} \|X-Y\|_{\rm F}^2  \\
&+\langle \Lambda, X-Y \rangle,\qquad \forall X,Y,\Lambda\in\mathbb{R}^{n\times n}.
 \end{aligned}
\]
%Here $A,X,Y,\Lambda$ are all assumed to be symmetric matrices.
Starting from initialization points $\{X^0,Y^0,\Lambda^0\}$, our algorithm updates $\{X^k,Y^k,\Lambda^k\}$ alternatively as:
\begin{align}%\label{optim}
X^{k+1} &=\argmin_{X\in {\cal P}_K} {\cal L}_\rho (X, Y^k, \Lambda^k),\label{X}\\
Y^{k+1} &=\argmin_{Y} {\cal L}_\rho (X^{k+1}, Y, \Lambda^k),  \label{Y}\\
\label{lambda}
\Lambda^{k+1} &= \Lambda^k + \rho(X^{k+1}-Y^{k+1}) .
\end{align}
These updates have closed-form solutions and thus can be implemented efficiently. Specifically, problem \eqref{X} is equivalent to the following problem:
\[
X^{k+1}=\argmax_{X\in {\cal P}_K} \langle X,W^k\rangle,\quad W^k=2A+\rho Y^k-\Lambda^k,
\]
and $X^{k+1}$ is given by the projection matrix associated with the leading $K$ eigenvectors of $W^{k}$. Problem \eqref{Y} is equivalent to: 
\[
Y^{k+1}=\argmin_{Y}\|Y-V^{k+1}\|_{\rF}^2+\tau\sum_{ij}g_{\alpha,\beta}(Y_{ij}),
\]
where $V^{k+1}=X^{k+1}+\Lambda^k/\rho$ and $\tau={2\lambda}/{\rho}$. This is a separable problem, and each entry $Y^{k+1}_{ij}$ can be solved by
\begin{align*}
Y^{k+1}_{ij}=\argmin_{Y_{ij}}\left(Y_{ij}-V^{k+1}_{ij}\right)^2+\tau g_{\alpha,\beta}(Y_{ij}).
\end{align*}
In a compact form, the solution $Y^{k+1}$ can be written as
\[
Y^{k+1}=\frac{V^{k+1}+\tau\mathcal{P}_{\alpha,\beta}(V^{k+1})}{1+\tau},
\]
where $\mathcal{P}_{\alpha,\beta}(\cdot)$ is an entrywise projection operator given by
\[
\mathcal{P}_{\alpha,\beta}(V)=\min\{\max\{V,\alpha\},\beta\}.
\]
Here $\min$ and $\max$ are defined entrywise.

\section{Convergence Theory}

This section provides convergence properties of the proposed ADMM algorithm. We show that any limiting point of the solution sequence is a stationary point of problem \eqref{Convex}. Our proof consists of three components. First, we show in Lemma~\ref{dual} that the successive change of the dual variable $\Lambda$ is controlled by that of $Y$.

\begin{lemma}\label{dual}
 $\|\Lambda^{k+1}-\Lambda^k\|_{\rF}\leq 2\lambda\|Y^{k+1}-Y^{k}\|_{\rF}$.
\end{lemma}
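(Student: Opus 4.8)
The plan is to express both $\Lambda^{k+1}-\Lambda^k$ and $Y^{k+1}-Y^k$ in terms of the gradient of the penalty $g_{\alpha,\beta}$, and then to use the Lipschitz continuity of that gradient. First I would use the dual update \eqref{lambda} to write $\Lambda^{k+1}-\Lambda^k=\rho(X^{k+1}-Y^{k+1})$. Next, since ${\cal L}_\rho$ is differentiable and convex in $Y$, the $Y$-subproblem \eqref{Y} is characterized by the first-order condition $\nabla_Y{\cal L}_\rho(X^{k+1},Y^{k+1},\Lambda^k)=0$, that is, $\lambda\nabla g_{\alpha,\beta}(Y^{k+1})-\rho(X^{k+1}-Y^{k+1})-\Lambda^k=0$, where $\nabla g_{\alpha,\beta}(\cdot)$ denotes the entrywise derivative $g_{\alpha,\beta}'(\cdot)$. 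Combining this with the dual update yields the identity
\[
\Lambda^{k+1}=\lambda\,\nabla g_{\alpha,\beta}(Y^{k+1}),
\]
valid at every iteration (up to the choice of $\Lambda^0$, or for $k\ge 1$), and hence $\Lambda^{k+1}-\Lambda^k=\lambda\bigl(\nabla g_{\alpha,\beta}(Y^{k+1})-\nabla g_{\alpha,\beta}(Y^{k})\bigr)$.

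It then remains to show that $g_{\alpha,\beta}'$ is Lipschitz with constant $2$. I would compute it explicitly from \eqref{gab}: $g_{\alpha,\beta}'(x)=2\min\{x-\alpha,0\}+2\max\{x-\beta,0\}$, which equals $2(x-\alpha)$ for $x\le\alpha$, equals $0$ for $x\in[\alpha,\beta]$, and equals $2(x-\beta)$ for $x\ge\beta$; equivalently $g_{\alpha,\beta}'(x)=2\bigl(x-\mathcal{P}_{\alpha,\beta}(x)\bigr)$. This is a continuous, nondecreasing, piecewise-linear function whose slope is everywhere $0$ or $2$, so a short case check over the positions of $x$ and $y$ relative to $[\alpha,\beta]$ gives $|g_{\alpha,\beta}'(x)-g_{\alpha,\beta}'(y)|\le 2|x-y|$ for all $x,y\in\mathbb{R}$. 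Applying this entrywise gives
\[
\|\nabla g_{\alpha,\beta}(Y^{k+1})-\nabla g_{\alpha,\beta}(Y^{k})\|_{\rF}^2=\sum_{ij}\bigl(g_{\alpha,\beta}'(Y^{k+1}_{ij})-g_{\alpha,\beta}'(Y^{k}_{ij})\bigr)^2\le 4\sum_{ij}(Y^{k+1}_{ij}-Y^{k}_{ij})^2=4\|Y^{k+1}-Y^{k}\|_{\rF}^2 .
\]
Therefore $\|\Lambda^{k+1}-\Lambda^k\|_{\rF}=\lambda\|\nabla g_{\alpha,\beta}(Y^{k+1})-\nabla g_{\alpha,\beta}(Y^{k})\|_{\rF}\le 2\lambda\|Y^{k+1}-Y^{k}\|_{\rF}$, which is the claim.

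The computation is essentially routine; I expect the only genuine subtleties to be (i) applying the identity $\Lambda^{k+1}=\lambda\nabla g_{\alpha,\beta}(Y^{k+1})$ consistently across iterations (it presupposes the $Y$-step has been executed, or a compatible initialization of $\Lambda^0$), and (ii) obtaining the sharp Lipschitz constant $2$ for $g_{\alpha,\beta}'$: bounding the two squared terms of $g_{\alpha,\beta}$ separately would only give $4$, so one must use the explicit piecewise form (the two one-sided quadratics act on disjoint regions) to get the constant $2$ that the lemma asserts.
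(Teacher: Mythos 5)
Your proposal is correct and follows essentially the same route as the paper: both combine the dual update \eqref{lambda} with the stationarity condition of the $Y$-subproblem to get $\Lambda^{k+1}_{ij}=\lambda g_{\alpha,\beta}'(Y^{k+1}_{ij})$, then invoke the Lipschitz continuity of $g_{\alpha,\beta}'$ with constant $2$ (your expression $2\min\{x-\alpha,0\}+2\max\{x-\beta,0\}$ is identical to the paper's $2\min\{x-\alpha,0\}-2\min\{\beta-x,0\}$). Your side remarks are also apt: the paper likewise glosses over the $k=0$ initialization issue, and it asserts the constant $2$ without the piecewise case check you supply.
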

\begin{proof}
By definition \eqref{lambda} of $\Lambda^{k+1}$, we have
\begin{align}
\Lambda_{ij}^{k+1}&=\Lambda_{ij}^{k}+\rho(X^{k+1}_{ij}-Y^{k+1}_{ij})=\lambda g_{\alpha,\beta}'(Y^{k+1}_{ij}),\label{dg}
\end{align}
where the second equality uses the fact that $Y^{k+1}$ is a stationary point of $\mathcal{L}_{\rho}(X^{k+1},Y,\Lambda^k)$. By \eqref{gab}, we can compute
\[
g'_{\alpha,\beta}(x) = 2 \min\{x-\alpha,0\}-2\min\{\beta-x,0\}. 
\]
This is a Lipschitz continuous function with Lipschitz constant $2$. Hence, we have
\[
|\Lambda^{k+1}_{ij}-\Lambda^{k}_{ij}|= \lambda|g_{\alpha,\beta}'(Y^{k+1}_{ij})-g_{\alpha,\beta}'(Y_{ij}^k)|\leq 2\lambda|Y^{k+1}_{ij}-Y^{k}_{ij}|.
\]
Summing over all $i,j$, we prove the lemma.
\end{proof}

%Next, we show that the sequence $\{L_\rho(X^k, Y^k,\Lambda^k)\}$ decreases to zeros with the speed higher than $O(\|Y^{k+1}-Y^k\|_2)$. 

%It is worth noting that since $g_{\alpha,\beta}(\cdot)$ is a convex function, for any $(i,j)$ there is: $ g_{\alpha,\beta}(Y^k_{i,j}) \geq  g_{\alpha,\beta}(Y_{i,j}^{k+1}) +  g_{\alpha,\beta}'(Y_{i,j}^{k+1})( Y_{i,j}^k - Y_{i,j}^{k+1} )$. In matrix form,
%%Because of the fact $\Lambda^{k+1} = \lambda\nabla G_{\alpha,\beta}(Z^{k+1})$ from Lemma \ref{subgradient},  there is
%\[
%\lambda \sum_{i,j}g_{\alpha,\beta}(Y^k_{i,j}) \geq \lambda \sum_{i,j} g_{\alpha,\beta}(Y_{i,j}^{k+1}) + \langle \Lambda^{k+1}, Y^k - Y^{k+1} \rangle.
%\]
%This inequality will be used to prove the convergence of ADMM iterations.

Next, we show that ${\cal L}_{\rho}(X^{k},Y^{k},\Lambda^{k})$ is decreasing in $k$ and the difference is lower bounded by $\lambda\|Y^{k+1}-Y^k\|_{\rF}^2$ when we set $\rho=4\lambda$.

\begin{lemma} \label{covergence}
Let $\rho=4\lambda$. The following inequality holds:
\begin{align*}
D&\coloneqq{\cal L}_\rho (X^{k}, Y^k, \Lambda^k)-{\cal L}_\rho (X^{k+1}, Y^{k+1}, \Lambda^{k+1})\\
&\geq \lambda \|Y^{k+1}-Y^k\|_{\rF}^2.
\end{align*}
\end{lemma}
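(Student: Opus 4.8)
The plan is to decompose the one-step decrease $D$ into three pieces corresponding to the three block updates and bound each piece, then combine. Write
\[
D = \underbrace{\big[{\cal L}_\rho(X^k,Y^k,\Lambda^k)-{\cal L}_\rho(X^{k+1},Y^k,\Lambda^k)\big]}_{D_1}
+\underbrace{\big[{\cal L}_\rho(X^{k+1},Y^k,\Lambda^k)-{\cal L}_\rho(X^{k+1},Y^{k+1},\Lambda^k)\big]}_{D_2}
+\underbrace{\big[{\cal L}_\rho(X^{k+1},Y^{k+1},\Lambda^k)-{\cal L}_\rho(X^{k+1},Y^{k+1},\Lambda^{k+1})\big]}_{D_3}.
\]
For $D_1$: since $X^{k+1}$ minimizes ${\cal L}_\rho(\cdot,Y^k,\Lambda^k)$ over ${\cal P}_K$, we have $D_1\ge 0$; I would simply keep $D_1\ge 0$. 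For $D_3$: by the definition of the $\Lambda$ update, ${\cal L}_\rho(X^{k+1},Y^{k+1},\Lambda^k)-{\cal L}_\rho(X^{k+1},Y^{k+1},\Lambda^{k+1}) = \langle \Lambda^k-\Lambda^{k+1}, X^{k+1}-Y^{k+1}\rangle = -\frac1\rho\|\Lambda^{k+1}-\Lambda^k\|_{\rF}^2$, which by Lemma~\ref{dual} is at least $-\frac{4\lambda^2}{\rho}\|Y^{k+1}-Y^k\|_{\rF}^2 = -\lambda\|Y^{k+1}-Y^k\|_{\rF}^2$ when $\rho=4\lambda$. So $D_3$ contributes the negative term that we will need to absorb.

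The decisive step is the lower bound on $D_2$, the decrease from the $Y$-update. Here I would exploit strong convexity in $Y$: the map $Y\mapsto {\cal L}_\rho(X^{k+1},Y,\Lambda^k) = \lambda\sum_{ij}g_{\alpha,\beta}(Y_{ij}) + \frac{\rho}{2}\|X^{k+1}-Y\|_{\rF}^2 - \langle\Lambda^k,Y\rangle + \text{const}$ is $\rho$-strongly convex (since $g_{\alpha,\beta}$ is convex and the quadratic term contributes $\frac{\rho}{2}\|\cdot\|_{\rF}^2$). Because $Y^{k+1}$ is its minimizer, the standard strong-convexity inequality gives
\[
{\cal L}_\rho(X^{k+1},Y^k,\Lambda^k) - {\cal L}_\rho(X^{k+1},Y^{k+1},\Lambda^k) \ge \frac{\rho}{2}\|Y^{k+1}-Y^k\|_{\rF}^2,
\]
so $D_2\ge \frac{\rho}{2}\|Y^{k+1}-Y^k\|_{\rF}^2 = 2\lambda\|Y^{k+1}-Y^k\|_{\rF}^2$ when $\rho=4\lambda$.

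Putting the three bounds together, $D = D_1 + D_2 + D_3 \ge 0 + 2\lambda\|Y^{k+1}-Y^k\|_{\rF}^2 - \lambda\|Y^{k+1}-Y^k\|_{\rF}^2 = \lambda\|Y^{k+1}-Y^k\|_{\rF}^2$, which is the claim. The main obstacle — really the only subtle point — is getting the constant in the $D_2$ bound right: one must be careful that the strong convexity modulus is exactly $\rho$ (not $\rho/2$ or $2\rho$) and that the first-order optimality of $Y^{k+1}$ is used correctly, since this factor of $\rho$ is exactly what makes the choice $\rho=4\lambda$ produce the clean residual $\lambda\|Y^{k+1}-Y^k\|_{\rF}^2$ after cancelling the $-\lambda\|Y^{k+1}-Y^k\|_{\rF}^2$ coming from $D_3$. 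If one instead only invokes convexity plus the optimality condition one still recovers the same inequality, but the strong-convexity route is cleanest. Everything else is bookkeeping with the definition of ${\cal L}_\rho$ and the update rules.
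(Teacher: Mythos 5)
Your proof is correct and follows essentially the same route as the paper: both arguments reduce the claim to the intermediate bound $D \geq \frac{\rho}{2}\|Y^{k+1}-Y^k\|_{\rF}^2 - \frac{1}{\rho}\|\Lambda^{k+1}-\Lambda^k\|_{\rF}^2$ and then apply Lemma~\ref{dual} with $\rho=4\lambda$. The only difference is organizational: you split off the $\Lambda$-update as a separate third term and get the $Y$-step decrease from $\rho$-strong convexity of the $Y$-subproblem, whereas the paper merges the $Y$- and $\Lambda$-steps into one term and derives the same bound by expanding the augmented Lagrangian and using the first-order convexity inequality for $g_{\alpha,\beta}$ together with the identity $\Lambda^{k+1}_{ij}=\lambda g_{\alpha,\beta}'(Y^{k+1}_{ij})$.
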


\begin{proof}
First, we write $D=D_1+D_2$, where
%To derive a lower bound of ${\cal L}_\rho ( X^{k}, Y^k, \Lambda^k) - {\cal L}_\rho (X^{k+1}, Y^{k+1}, \Lambda^{k+1})$, we can divide it into two parts: 
\begin{align*}
D_1&\coloneqq {\cal L}_\rho (X^k, Y^k,\Lambda^k) - {\cal L}_\rho (X^{k+1}, Y^k, \Lambda^k),\\
D_2&\coloneqq
{\cal L}_\rho (X^{k+1}, Y^k, \Lambda^k) - {\cal L}_\rho (X^{k+1}, Y^{k+1},\Lambda^{k+1}).
\end{align*}
%and consider their lower bounds, respectively. 
The term $D_1$ is non-negative because of \eqref{X}. 
%$X^{k+1}$ minimizes ${\cal L}_{\rho}(X,Y^k,\Lambda^k).$
%The first part is greater than $0$ because $X^{k+1}$ is the optimal minima of ${\cal L}_\rho (X, Y^k, \Lambda^k)$. 
%Next, we derive an upper bound for the second term. 
%Substituting $(X^{k+1},Y^k,\Lambda^k)$ and $(X^{k+1},Y^{k+1},\Lambda^k)$ into ${\cal L}_{\cal \rho}(X,Y,\Lambda)$ leads to
For the term $D_2$, we have
\begin{align*}
D_2=E_1+E_2+E_3,
\end{align*}
where
\begin{align*}
E_1&=\frac{\rho}{2} \left(\|X^{k+1}-Y^k\|_F^2-  \|X^{k+1}-Y^{k+1}\|_F^2\right),\\
E_2&= \langle \Lambda^k, X^{k+1}-Y^k \rangle -  \langle \Lambda^{k+1}, X^{k+1}-Y^{k+1} \rangle, \\
E_3&=\lambda \sum_{ij}(g_{\alpha,\beta}(Y^k_{ij}) - g_{\alpha,\beta}(Y^{k+1}_{ij}) ).
%=& \lambda (g(Y_k) - g(Y_{k+1})) + \frac{\rho}{2} (\|X_{k+1}-Y_k\|_F^2-  \|X_{k+1}-Y_{k+1}\|_F^2) +\langle \rangle
\end{align*}
The term $E_1$ can be rewritten as
\begin{align*}
E_1&=\frac{\rho}{2}\|Y^{k+1}-Y^{k}\|_{\rF}^2+\rho\langle X^{k+1}-Y^{k+1},Y^{k+1}-Y^k\rangle\\
&=\frac{\rho}{2}\|Y^{k+1}-Y^{k}\|_{\rF}^2+\underbrace{\langle \Lambda^{k+1}-\Lambda^{k},Y^{k+1}-Y^k\rangle}_{E_4}.
\end{align*}
%By defining $F(Y) = \frac{\rho}{2} \|X_{k+1}-Y\|_F^2$, the term $\romannumeral1$ can be written as $F(Y^k)-F(Y^{k+1})$. Since $F(Y)$ is a quadratic polynomial function, its third-order derivative for $Y$ is $0$. Therefore, its Taylor expansion can only be expanded up to order 2. As a result, the Taylor expansion of $F(Y^k)$ at $Y^{k+1}$ is:
%\[
%\begin{aligned}
%\frac{\rho}{2} (\|Y^{k+1}-Z^k\|_F^2-  \|Y^{k+1}-Z^{k+1}\|_F^2)
%& F(Y^k)-F(Y^{k+1})\\
%=&\frac{\rho}{2}(\|Y_{k+1}-Z_{k+1}+Z_{k+1}-Z_k\|_F^2 -\|Y_{k+1}-Z_{k+1}\|_F^2)\\
%=&\frac{\rho}{2} \|Y^{k+1}-Y^k\|_F^2 + \rho \langle X^{k+1}-X^{k+1},  Y^{k+1}-Y^k\rangle\\
%=&\frac{\rho}{2} \|Y^{k+1}-Y^k\|_F^2 + \underbrace{ \langle \Lambda^{k+1}-\Lambda^k,  Y^{k+1}-Y^k\rangle}_{\romannumeral4}.
%\end{aligned}
%\]
The term $E_2$ can be rewritten as
\begin{align*}
E_2
%=& \langle \Lambda^k, X^{k+1}-Y^{k+1}+Y^{k+1}-Y^k \rangle -  \langle \Lambda^{k+1}, X^{k+1}-Y^{k+1} \rangle\\
&= \langle \Lambda^k - \Lambda^{k+1}, X^{k+1}-Y^{k+1}\rangle+\langle \Lambda^k, Y^{k+1}-Y^k \rangle\\
&=-\frac{1}{\rho} \|\Lambda^{k+1}-\Lambda^{k}\|_{\rF}^2+\underbrace{\langle \Lambda^k, Y^{k+1}-Y^k \rangle}_{E_5},
\end{align*}
where the last equality uses \eqref{lambda}. Since $\Lambda^{k+1}_{ij}=\lambda g_{\alpha,\beta}'(Y^{k+1}_{ij})$ by \eqref{dg} and $g_{\alpha,\beta}(\cdot)$ is a convex function, we have
\begin{align*}
&E_3+E_4+E_5\\
=\ &
\lambda\sum_{ij}\left[g_{\alpha,\beta}(Y^{k}_{ij})-g_{\alpha,\beta}(Y^{k+1}_{ij})-g_{\alpha,\beta}'(Y_{ij}^{k+1})(Y_{ij}^k-Y_{ij}^{k+1})\right]\\
\geq\ & 0.
\end{align*}
Combining the above analysis, we have
%Adding up the terms of $\romannumeral3,\romannumeral4$ and $\romannumeral5$ equals to
%Using the convexity of $g_{\alpha,\beta}(\cdot)$ and adding up three terms of $\romannumeral3,\romannumeral4,\romannumeral5$, there is:
%$\lambda \sum_{i,j}(g_{\alpha,\beta}(Y^k_{i,j}) - g_{\alpha,\beta}(Y^{k+1}_{i,j}))+\langle \Lambda^{k+1}, Y^{k+1}-Y^k\rangle,$ which is positive because $g_{\alpha,\beta}(x)$ is convex. 
\begin{align*}
D& \geq  \frac{\rho}{2} \|Y^{k+1}-Y^k\|_{\rF}^2 - \frac{1}{\rho} \|\Lambda^k-\Lambda^{k+1}\|_{\rF}^2\\
&\geq\lambda\|Y^{k+1}-Y^k\|_{\rF}^2,
\end{align*}
where we use Lemma \ref{dual} and $\rho=4\lambda$.
%By setting $\rho^2 = 8\lambda^2$, the lower-bound becomes ${\sqrt{2}\lambda}/{2} \|Y^{k+1}-Y^k\|_F^2$. 
%Recall that $X^{k+1}$ is the global optimal solution of $\min_X {\cal L}_\rho (X, Y^k, \Lambda^k)$. 
%Thus, we know ${\cal L}_\rho (X^k, Y^k,\Lambda^k) \geq {\cal L}_\rho (X^{k+1}, Y^k, \Lambda^k)$. By merging the above inequalities, we have the following inequalities, which lead to our final conclusion.
%\begin{equation}\label{decrese_step}
%\begin{aligned}
%&{\cal L}_\rho (X^{k}, Y^k, \Lambda^k) - {\cal L}_\rho (X^{k+1}, Y^{k+1}, \Lambda^{k+1})\\
%=&{\cal L}_\rho (X^{k}, Y^k, \Lambda^k) - {\cal L}_\rho (X^{k+1}, Y^k, \Lambda^k) \\
%&+ {\cal L}_\rho (X^{k+1}, Y^k, \Lambda^k)
%-  {\cal L}_\rho (X^{k+1}, Y^{k+1}, \Lambda^{k+1})\\
%\geq & {\sqrt{2}\lambda}/{2} \|Y^{k+1}-Y^k\|_F^2.
%\end{aligned}
%\end{equation}
\end{proof}

%From the optimal condition  \eqref{X},\eqref{Y},\eqref{lambda} and the convergence condition in Theorem \ref{Cauchy}, we naturally arrive at the following theorem.

Finally, we show that any limiting point of $\{X^k,Y^k,\Lambda^k\}$ is a stationary point of problem \eqref{Convex}.
%converges to $\{X^*, Y^*, \Lambda^*\}$, which satisfies the KKT conditions of \eqref{Convex}.

\begin{theorem}
Let $\rho = 4\lambda$ and $\{X^*,Y^*,\Lambda^*\}$ be a limiting point of $\{X^k,Y^k,\Lambda^k\}$. Then the KKT conditions of problem \eqref{Convex} hold:
\begin{equation}
    \left\{
    \begin{aligned}
&X^*=Y^*,\ X^*\textnormal{ is the projection matrix associated}\\
&\quad \textnormal{ with the top }K\textnormal{ eigenvectors of }W^*,\\
&\Lambda_{ij}^*+\rho (Y^*_{ij}-X_{ij}^*)=\lambda g_{\alpha,\beta}'(Y^*_{ij}),\quad \forall i,j,
\end{aligned}\right.\label{kkt}
\end{equation}
where $W^*=2A+\rho Y^*-\Lambda^*$.
\end{theorem}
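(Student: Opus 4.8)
The plan is to pass to the limit along a convergent subsequence in the first‑order optimality conditions of the $X$‑ and $Y$‑subproblems; the essential preparation is to show that the successive increments $\|Y^{k+1}-Y^k\|_{\rF}$, $\|\Lambda^{k+1}-\Lambda^k\|_{\rF}$ and $\|X^{k+1}-Y^{k+1}\|_{\rF}$ all tend to zero.

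First I would argue that $\{{\cal L}_\rho(X^k,Y^k,\Lambda^k)\}$ converges. By Lemma~\ref{covergence} this sequence is non‑increasing; since ${\cal L}_\rho$ is continuous and, by hypothesis, some subsequence of the iterates converges to $(X^*,Y^*,\Lambda^*)$, the associated subsequence of Lagrangian values converges, and a monotone real sequence with a convergent subsequence converges. (Independently, one can show the Lagrangian is bounded below along the iterates: substituting $\Lambda^{k+1}_{ij}=\lambda g'_{\alpha,\beta}(Y^{k+1}_{ij})$ from \eqref{dg} and $X^{k+1}-Y^{k+1}=\rho^{-1}(\Lambda^{k+1}-\Lambda^k)$ into ${\cal L}_\rho$ and applying Young's inequality with $\rho=4\lambda$ gives ${\cal L}_\rho(X^{k+1},Y^{k+1},\Lambda^{k+1})\ge\|A-X^{k+1}\|_{\rF}^2+\tfrac{\lambda}{2}\sum_{ij}g_{\alpha,\beta}(Y^{k+1}_{ij})\ge0$; this also bounds $\{Y^k\}$ and hence $\{\Lambda^k\}$, which with compactness of ${\cal P}_K$ guarantees a limiting point exists.) Telescoping the inequality of Lemma~\ref{covergence} then gives $\lambda\sum_k\|Y^{k+1}-Y^k\|_{\rF}^2<\infty$, so $\|Y^{k+1}-Y^k\|_{\rF}\to0$; Lemma~\ref{dual} gives $\|\Lambda^{k+1}-\Lambda^k\|_{\rF}\le2\lambda\|Y^{k+1}-Y^k\|_{\rF}\to0$; and the dual update \eqref{lambda} gives $\|X^{k+1}-Y^{k+1}\|_{\rF}=\rho^{-1}\|\Lambda^{k+1}-\Lambda^k\|_{\rF}\to0$.

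Then I would fix a subsequence $k_j\to\infty$ with $(X^{k_j},Y^{k_j},\Lambda^{k_j})\to(X^*,Y^*,\Lambda^*)$. Since $\|X^{k_j}-Y^{k_j}\|_{\rF}\to0$, we get $X^*=Y^*$. Because $\|Y^{k+1}-Y^k\|_{\rF}\to0$ and $\|\Lambda^{k+1}-\Lambda^k\|_{\rF}\to0$, also $Y^{k_j-1}\to Y^*$ and $\Lambda^{k_j-1}\to\Lambda^*$, hence $W^{k_j-1}=2A+\rho Y^{k_j-1}-\Lambda^{k_j-1}\to W^*$. Passing to the limit in the optimality of the $X$‑update, $\langle X^{k_j},W^{k_j-1}\rangle\ge\langle X,W^{k_j-1}\rangle$ for all $X\in{\cal P}_K$, yields $\langle X^*,W^*\rangle\ge\langle X,W^*\rangle$ for all $X\in{\cal P}_K$, i.e.\ $X^*$ maximizes $\langle X,W^*\rangle$ over ${\cal P}_K$, which is precisely the first line of \eqref{kkt}. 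Finally, $\Lambda^{k_j}_{ij}=\lambda g'_{\alpha,\beta}(Y^{k_j}_{ij})$ from \eqref{dg} passes to the limit by continuity of $g'_{\alpha,\beta}$, giving $\Lambda^*_{ij}=\lambda g'_{\alpha,\beta}(Y^*_{ij})$, which combined with $X^*=Y^*$ is the second line of \eqref{kkt}.

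I expect the main obstacle to be the first step: certifying that ${\cal L}_\rho$ has a finite limit (equivalently, is bounded below along the iterates) despite the indefinite term $\langle\Lambda,X-Y\rangle$ in the augmented Lagrangian — this is exactly where the tuning $\rho=4\lambda$ and the identity $\Lambda^{k+1}=\lambda g'_{\alpha,\beta}(Y^{k+1})$ are used. A secondary point requiring care is feeding the $X$‑update its own data $W^{k_j-1}$ rather than $W^{k_j}$ when passing to the limit, which is legitimate only because the increments of $Y$ and $\Lambda$ vanish.
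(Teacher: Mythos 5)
Your proposal is correct and follows essentially the same route as the paper: lower-bound the augmented Lagrangian using the identity $\Lambda^{k+1}_{ij}=\lambda g_{\alpha,\beta}'(Y^{k+1}_{ij})$, telescope Lemma~\ref{covergence} to force all successive increments to vanish, and pass to the limit in the optimality conditions of the two subproblems along the convergent subsequence. The only (harmless) divergences are in execution: the paper certifies ${\cal L}_\rho\ge 0$ via convexity of $g_{\alpha,\beta}$ rather than your Young's-inequality computation, and it obtains the first KKT condition by arguing $X^{k_u}$ converges to the eigenprojection of $W^*$ directly, whereas your passage to the limit in the variational inequality $\langle X^{k_j},W^{k_j-1}\rangle\ge\langle X,W^{k_j-1}\rangle$ is, if anything, slightly more careful since it avoids any implicit continuity-of-argmax assumption.
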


\begin{proof}
Since $\Lambda^{k+1}_{ij}=\lambda g_{\alpha,\beta}'(Y^{k+1}_{ij})$ by \eqref{dg} and  $g_{\alpha,\beta}(\cdot)$ is a convex function, we have
%By Lemma \ref{subgradient}, we know that . Because of the convexity of $g_{\alpha,\beta}(\cdot)$, 
\[
\lambda \sum_{ij}g_{\alpha,\beta}(Y^{k+1}_{ij}) + \langle \Lambda^{k+1}, X^{k+1}-Y^{k+1}\rangle\geq \lambda \sum_{ij}g_{\alpha,\beta}(X^{k+1}_{ij}).
\]
Substituting this into ${\cal L}_\rho (X^{k+1}, Y^{k+1}, \Lambda^{k+1})$, we obtain
{ 
\begin{align}
 &{\cal L}_\rho (X^{k+1}, Y^{k+1}, \Lambda^{k+1}) \notag  \\  \geq\ & \|A-X^{k+1}\|_{\rF}^2 + \frac{\rho}{2} \|X^{k+1}-Y^{k+1}\|_{\rF}^2 + \lambda \sum_{ij} g_{\alpha,\beta}(X^{k+1}_{ij})\notag\\
  \geq \ &0\label{positive}
 %= & \|A-X^{k+1}\|_F^2 + \lambda \sum_{i,j} g_{\alpha,\beta}(Y^{k+1}_{i,j}) + \frac{\rho}{2} \|X^{k+1}-Y^{k+1}\|_F^2\\
 %& + \langle \Lambda^{k+1}, X^{k+1}-Y^{k+1} \rangle\\
\end{align}}
Then by Lemma \ref{covergence},
\begin{align}
& \lambda\sum_{k=0}^s  \|Y^{k+1}-Y^k\|_{\rF}^2\notag \\
\leq\ & \sum_{k=0}^s {\cal L}_\rho (X^{k}, Y^{k}, \Lambda^{k}) - {\cal L}_\rho (X^{k+1}, Y^{k+1}, \Lambda^{k+1})\notag \\
=\ &{\cal L}_\rho (X^{0}, Y^{0}, \Lambda^{0}) - {\cal L}_\rho (X^{s+1}, Y^{s+1}, \Lambda^{s+1})\notag \\
\leq \ &{\cal L}_\rho (X^{0}, Y^{0}, \Lambda^{0}),\label{sum}
\end{align}
where the last inequality uses \eqref{positive}. \eqref{sum} implies that $\|Y^{k+1}-Y^k\|_{\rF}$ converges to zero, and by Lemma~\ref{dual},  $\|\Lambda^{k+1}-\Lambda^k\|_{\rF}\leq 2\lambda\|Y^{k+1}-Y^k\|_{\rF}$ also converges to zero. By \eqref{lambda}, $X^{k+1}-Y^{k+1}$ also converges to zero.

%Since $\sum_{k=0}^s  \|Y^{k+1}-Y^k\|_F^2$ increases with $s$ and has a upper-bound $\frac{\sqrt{2}}{\lambda}{\cal L}_\rho (X^{0}, Y^{0}, \Lambda^{0})$, we know the sequence $\{T_s: = \sum_{k=0}^s  \|Y^{k+1}-Y^k\|_F^2\}$ converges. 
%Thus, the term $\|Y^{s+1}-Y^s\|_F^2$  approaches $0$ with the increasing of $s$. 
%Since $\|\Lambda^{s+1}-\Lambda^s\|_F$ is controlled by $\|Y^{s+1}-Y^s\|_F$ from Lemma \ref{dual}, we also have $\lim_{s}\|\Lambda^{s+1}-\Lambda^s\|= 0$. 

%The bounded property of $\{X^k\}$ derives from its rank-$d$ projection constraint. Since $\lim_k \|X^k -Y^k\|_{\rF}=0$ and $Y^k$ is bounded, we know that $\{Y^k\}$ is also bounded when $k$ is large enough. Recall that $\Lambda^{k+1} = \lambda \nabla G_{\alpha,\beta}(Y^{k+1})$ and $g'_{\alpha,\beta}(x)= 2 \min\{x-\alpha,0\}-2\min\{\beta-x,0\}$. The bounded property of $\{Y^{k+1}\}$ implies $\{\Lambda^{k+1} \}$ are also bounded.

Suppose $\{k_u\}_{u=1}^{\infty}$ is a sequence satisfying
\[
\lim_{u\to\infty}(X^{k_u},Y^{k_u},\Lambda^{k_u})=(X^*,Y^*,\Lambda^*).
\]
The convergence of $X^{k_u}-Y^{k_u}$ to zero implies that $X^*=Y^*$. Also, $Y^{k_u-1}-Y^*$ and $\Lambda^{k_u-1}-\Lambda^*$ converge to zero because $Y^{k_u}-Y^{k_u-1}$ and $\Lambda^{k_u}-\Lambda^{k_u-1}$ converge to zero. By \eqref{X}, $X^{k_u}$ is the projection matrix associated with the top eigenvectors of $W^{k_u-1}=2A+\rho Y^{k_u-1}-\Lambda^{k_{u}-1}$. Since $W^{k_u-1}$ converges to $W^*$, $X^{k_u}$ converges to the projection matrix associated with the top $K$ eigenvectors of $W^*$. This proves the first condition in \eqref{kkt}. By \eqref{Y}, we have
\[
\Lambda^{k_u-1}_{ij}+\rho(Y_{ij}^{k_u}-X_{ij}^{k_u})=\lambda g_{\alpha,\beta}'(Y_{ij}^{k_u}),\quad \forall i,j.
\]
Letting $u$ tend to infinity, we conclude the proof of \eqref{kkt}. 
\end{proof}

%\begin{itemize}
%\item 
%\end{itemize}

\begin{figure*}[h] %  figure placement: here, top, bottom, or page
   \centering
   \includegraphics[width=0.9\linewidth]{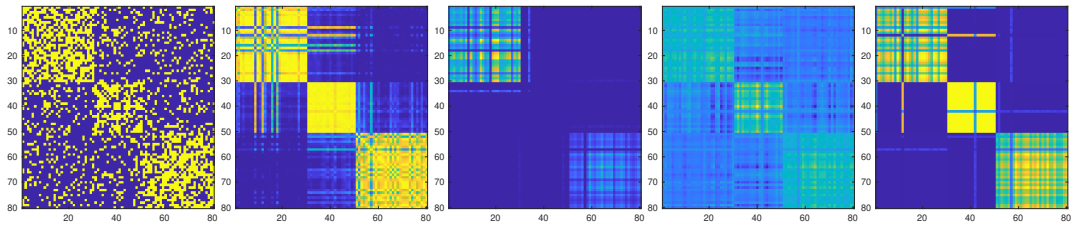} 
   \caption{The comparison between the similarity matrix $A$ and the solved projection matrix $X$ using SDP-1, SDP-2, spectral clustering, and our method. Here $A$ is generated by $A_{ij}\sim {\rm Bernoulli}(\Psi_{k\ell})$ if $i\in S_k$ and $j\in S_\ell$. We set $\Psi_{k\ell}= 0.20, \forall k\neq \ell$ and $\Psi_{kk} = 0.58,\forall k$.
   }
   \label{fig:example}
\end{figure*}
\begin{table*}[t]
\centering
\caption{The objectives and constraints for related algorithms\label{related algorithm}}
\resizebox{15cm}{!}{
\begin{tabular}{c|c|c|c|c}\hline \hline
&SDP-1&SDP-2&Spectral Method& BPMA\\ \hline
Objectives &$\max_X \ \langle A, X\rangle $ & $\max_X \ \langle A, X\rangle$ & $\max_X \ \langle A, X\rangle$ & $\max_X \ \langle A, X\rangle$ \\ \hline
Constraints & $
\begin{aligned}
&\ X {\bf 1}_n = n/K {\bf 1}_n,\ X\succeq 0, \\
&\ {\rm diag}(X) = {\bf 1}_n,\ X\geq 0 
\end{aligned}
$
& 
 $
\begin{aligned}
&\ \langle X, E_n \rangle = n^2/K,\ X\succeq 0, \\
&\ {\rm trace}(X) = n, \ 0 \leq X \leq 1 
\end{aligned}
$
& 
$
\begin{aligned}
 &\ X\in {\cal P}_K\\ 
 \end{aligned}
$
& 
$
\begin{aligned}
 &\ X\in {\cal P}_K \\
 &\ X_{i,j}\in [\alpha,\beta]
\end{aligned}
$\\ \hline \hline
\end{tabular}}
\end{table*}

\section{Community Detection}\label{sec:cd}

In this section, we apply the proposed algorithm to community detection. Specifically, we shall consider stochastic block models (SBM)~\cite{abbe2017community}. Suppose there are $n$ data points and $K$ groups. Let $S_{k}\subseteq\{1,\ldots,n\}$ be the index set of all data points in the $k$-th group. Let $\Psi\in[0,1]^{K\times K}$ be a symmetric matrix with $\Psi_{k\ell}$ representing the connectivity probability between the $k$-th group and the $\ell$-th group. The SBM assumes that the similarity matrix $A\in\{0,1\}^{n\times n}$ is a symmetric matrix. Moreover, $A_{ij}\sim{\rm Bernoulli}(\Psi_{k\ell})$ when $i\leq j$, $i\in S_{k}$ and $j\in S_{\ell}$. The goal is to infer the group information based on the similarity matrix $A$.

The group information is contained in the expectation of the similarity matrix $A$. Specifically,
\[
\mathbb{E}A=\Theta\Psi\Theta^\top,
\]
where  $\Theta\in\{0,1\}^{n\times K}$ is the assignment matrix with $\Theta_{ik}=1$ if and only if $i\in S_{k}$. Notice that $\Theta$ spans the top $K$ eigenspace of $\mathbb{E}A$, and $\Theta_{i\cdot}=\Theta_{j\cdot}$ if and only if $i,j$ belong to the same group. Hence, the top $K$ eigenvectors $U^*\in\mathbb{R}^{n\times K}$ of $\mathbb{E}A$ satisfies that $U^*_{i\cdot}=U^*_{j\cdot}$ if and only if $i,j$ share the same group. Utilizing this observation, the spectral clustering algorithm~\cite{von2007tutorial} computes the top $K$ eigenvectors $U\in\mathbb{R}^{n\times K}$ of $A$ and then performs $k$-means clustering on the rows of $U$. It is worth noting that the $k$-means clustering results using the rows of $UO$ remain invariant no matter which orthogonal matrix $O\in\mathbb{R}^{K\times K}$ is used. Therefore, the efficacy of the spectral clustering algorithm depends on the subspace spanned by $U$, or equivalently, the associated projection matrix. %The first step of spectral clustering can be regarded as the projection matrix approximation problem \eqref{equ:1}.

Ideally, the projection matrix $X$ should also be associated with the assignment matrix $\Theta$ for the observed affinity $A$. The entries of such projection matrix $X$ are within $[0,\max_k\frac{1}{n_k}]$, where $n_k$ is the size of the $k$-th group. Therefore, instead of solving problem \eqref{equ:1}, we propose to solve the BPMA problem \eqref{BPA} with $\alpha=0$ and $\beta=\max_k\frac{1}{n_k}$. In practice, $\beta$ can be set as $K/n$ when the clusters are balanced, and it can be set as $\beta=\frac{cK}{n}$ with some constant $c>1$ when the clusters are unbalanced.

% can be obtained from the  
%After solving \eqref{BPA}, we obtain the estimated projection matrix $X$. Then, we apply the eigendecomposition to obtain the basis vectors $U = [u_1,u_2,...,u_K]$. Finally, we perform $k$-means clustering on the normalized rows of $U$. The normalization step on the rows of $U$ can eliminate the influence caused by the unbalanced group size.

%Notice that the $k$-means clustering algorithm using rows of $UO\in\mathbb{R}^{n\times K}$ produces identical results, regardless of which orthogonal matrix $O\in\mathcal{O}^{K\times K}$ is used. 

\section{Experiments}
We compare the proposed RBPMA approach with other community detection methods, including SDP-1~\cite{AMINI2018}, SDP-2~\cite{chen2016statistical}, and spectral clustering~\cite{von2007tutorial}, on both synthetic and real-world datasets. The details of these algorithms are listed in Table \ref{related algorithm}.
We use the solution of problem \eqref{equ:1} as the initial point $X_0$ for RBPMA and set the hyper-parameter $\lambda$ as a large enough constant such as $10^8$.
After solving $X$, we apply eigen-decomposition to compute its top $K$ eigenvectors $U\in\mathbb{R}^{n\times K}$. Then we normalize each row of $U$ to get $\tilde U\in\mathbb{R}^{n\times K}$. Finally, we perform $k$-means clustering on the rows of $\tilde U$. We evaluate the clustering results using two standard criteria:  accuracy (ACC) and normalized mutual information (NMI).

\subsection{Synthetic Data}
For synthetic data, we set $n=80$ and $K=3$. The sizes of the three groups are $30,20,30$ respectively. Let $\Psi\in\mathbb{R}^{3\times 3}$ be the connectivity probability matrix with $\Psi_{k\ell}=0.2,\,\forall k\neq \ell$, and $\Psi_{kk}= 0.49, \, \forall k \neq \ell$. In other four settings, we simply change $\Psi_{kk}$ to each of $\{0.46,0.43,0.40,0.37\}$. 
We generate the similarity matrix $A$ such that $A_{ij}\sim {\rm Bernoulli}(\Psi_{k\ell})$ if $i\in S_k$ and $j\in S_\ell$, where $S_k$ denotes the index set of the $k$-th group. %\scomment{How did you tune $\lambda$?}

\begin{table}[h]
\centering
\caption{The comparisons between four community detection algorithms using the synthetic data\label{ACCNMI}}
\resizebox{8.5cm}{!}{
\begin{tabular}{c|c|c|c|c||c|c|c|c}\hline \hline
& \multicolumn{4}{c||}{ACC} & \multicolumn{4}{c}{NMI} \\ \hline
$\Psi_{kk}$ &  SDP-1 &SDP-2 & Spectral & RBPMA  & SDP-1 &SDP-2 & Spectral & RBPMA  \\ \hline
 0.49& 0.974  & 0.712  & 0.948  & \bf 0.977  & 0.911  & 0.497  & 0.830  & \bf 0.916 \\
0.46 & 0.951  & 0.681  & 0.926  & \bf 0.955  & 0.835  & 0.405  & 0.767  &\bf  0.840 \\
0.43& 0.864  & 0.618  & 0.813  & \bf 0.873  & 0.626  & 0.308  & 0.534  & \bf 0.638 \\
0.40& 0.781  & 0.582  & 0.725  & \bf 0.786  & 0.478  & 0.237  & 0.386  & \bf 0.481 \\
0.37 & 0.666  & 0.512  & 0.650  & \bf 0.687  & 0.303  & 0.139  & 0.282  & \bf 0.305 \\ \hline \hline
%\multicolumn{9}{r}{The parameter $\Psi_{k\ell},\forall k\neq \ell$ is set as $0.2$ in this table.}
 \end{tabular}}
 \end{table}

We apply all four community detection algorithms to $A$ and compute the corresponding ACC and NMI. In our RBPMA algorithm, we set $\alpha = 0$ and $\beta=1/20$. 
This experiment is repeated 20 times for each $\Psi$, and the average ACC and NMI are reported in Table~\ref{ACCNMI}.
The results show that our model outperforms its competitors uniformly. This agrees with the intuition that the quality of the rank-$K$ projection matrix is improved by using the extra boundedness constraint.

%For each $\Psi$, we randomly generate 20 versions of $A$ from $\Psi$ and report the average ACC and NMI corresponding to each algorithm. 
 %By solving the optimization problems corresponding to SDP-1, SDP-2, Spectral clustering, and bounded projection approximation, we can get the solutions that we denote as $\widehat{X}_1,\widehat{X}_2,\widehat{X}_{SP},\widehat{X}$ respectively. 
%Table \ref{ACCNMI} reports the average ACC and NMI corresponding to each method when we repeat $k$-means 100 times. 

%\begin{table}[t]
%\centering
%\caption{The comparison with related algorithms measured by ACC and NMI on simulated SBM model\label{ACCNMI}}
%\vspace{5mm}
%\resizebox{6cm}{!}{
%\begin{tabular}{c|c|c|c|c}\hline \hline
%	& SDP-1 &SDP-2 & Spectral & Ours      \\ \hline
%ACC & 0.950  & 0.913  & 0.972  & 0.988 \\ \hline
%NMI & 0.870  & 0.743  & 0.939  & 0.952 \\ \hline \hline
%\end{tabular}}
%\end{table}

%\vspace{-6mm}

\subsection{Real Data}
\begin{table}[h]
\caption{The ACC and NMI  comparison for the four different methods on four datasets. \label{RealCompare}}
\resizebox{8.5cm}{!}{
\begin{tabular}{c|cccc||cccc}\hline\hline
& \multicolumn{4}{c||}{ACC} & \multicolumn{4}{c}{NMI} \\ \hline
Data &  SDP-1 &SDP-2 & Spectral & RBPMA  & SDP-1 &SDP-2 & Spectral & RBPMA  \\ \hline
 COIL-10       & 0.554  & 0.550  & 0.550  & \bf 0.570  & 0.606  & 0.603  & 0.605  &\bf  0.616 \\
 COIL-20      & 0.558  & 0.551  & 0.550  &\bf  0.564  & 0.610  & 0.606  & 0.604  &\bf  0.613 \\
  DIGIT-5 & 0.927  & 0.912  & 0.929  &\bf 0.936  & 0.834  & 0.826  & 0.830  & \bf 0.842 \\
 DIGIT-10 & 0.707  & 0.696  & 0.697  & \bf 0.716  & 0.659  & 0.655  & 0.653  &\bf 0.669 \\  \hline \hline
\end{tabular}}
 \end{table}

 %\begin{table}[h]
%\centering
%\vspace{-4mm}
%\caption{General Information for three datasets. \label{Data}}
%\resizebox{4cm}{!}{
%\centering
%\begin{tabular}{c|ccc}\hline\hline 
% Data & $D$ & $n$ & $K$ \\ \hline
% COIL10  & 1024& 720&10 \\
% COIL20  & 1024&1440 &20\\
% DIGIT-5 & 240 & 1000 &5\\
% DIGIT-10 & 240 & 2000 &10\\ \hline
%\end{tabular}}
% \end{table}

This subsection compares RBPMA with SDP-1, SDP-2, and spectral clustering on the Coil10, Coil20, and handwritten digit datasets. \cite{Coil}. Digit5 consists of 1000  images with the shape $15\times 16$ from five groups, and Digit 10 consists of 2000 images from 10 groups.
%The dimension of features $D$, the overall sample $n$, and the group number $K$ for the three datasets are listed in Table \ref{Data}. 
Since all groups of Coil and Digit datasets share the same size, we set $\alpha=0$ and $\beta = K/n$ in this experiment.
The similarity matrix $A$ is constructed using the Gaussian kernel: $A_{ij} = \exp(-\|x_i-x_j\|_2^2/\sigma^2)$, where $x_i$ is the $i$-th data vector and $\sigma^2=\frac{2}{n(n-1)}\sum_{i<j}\|x_i-x_j\|_2^2$ is the average of squared pairwise distances. Table \ref{RealCompare} shows that RBPMA outperforms its competitors in terms of ACC and NMI on all Coil and Digit datasets. %\scomment{what are $\alpha$ and $\beta$?}

%\section{Conclusion}
%The BPMA model is suitable for finding a high-quality projection matrix by restricting the entrywise bounds, which can be observed from the clustering perspective. Other constraints for the projection matrix approximation can also be considered.

%å\section*{References}

%\subsection*{Basic format for books:}
%
%J. K. Author, ``Title of chapter in the book,'' in {\em Title of His Published Book}, xth ed. City of Publisher, (only U.S. State), Country: Abbrev. of Publisher, year, ch. x, sec. x, pp. xxx--xxx.
%
%\subsection*{Examples:}

%\begin{thebibliography}{34}

%\newpage

\bibliographystyle{unsrt}
\bibliography{Cmm}

\end{document}